\newtheorem{theorem}{Theorem}
\newtheorem{lemma}[theorem]{Lemma}
\begin{document}
\title{Incentivizing High-quality Content from Heterogeneous Users: \\On the Existence of Nash Equilibrium}
\author{Yingce Xia$^*$, Tao Qin$^\dag$, Nenghai Yu$^\ddag$ \and Tie-Yan Liu$^\dag$\\
$^{*\ddag}$Key Laboratory of Electromagnetic Space Information of CAS, USTC, Hefei, Anhui, 230027, P.R. China\\
$^\dag$Microsoft Research, Building 2, No. 5 Danling Street, Haidian District, Beijing, 100080, P.R.China\\
$^*$yingce.xia@gmail.com, $^\dag$\{taoqin,tyliu\}@microsoft.com, $^\ddag$ynh@ustc.edu.cn
}
\maketitle

\def\QEDclosed{\mbox{\rule[0pt]{1.3ex}{1.3ex}}}
\def\QEDopen{{\setlength{\fboxsep}{0pt}\setlength{\fboxrule}{0.2pt}\fbox{\rule[0pt]{0pt}{1.3ex}\rule[0pt]{1.3ex}{0pt}}}}
\newenvironment{proof}[1][Proof.]{\begin{trivlist}
\item[\hskip \labelsep {\bfseries #1}]}{\QEDclosed\end{trivlist}}
\newenvironment{proofsketch}[1][Proof sketch.]{\begin{trivlist}
\item[\hskip \labelsep {\bfseries #1}]}{\QEDclosed\end{trivlist}}

\begin{abstract}
\begin{quote}
We study the existence of pure Nash equilibrium (PNE) for the mechanisms used in Internet services (e.g., online reviews and question-answer websites) to incentivize users to generate high-quality content. Most existing work assumes that users are homogeneous and have the same ability. However, real-world users are heterogeneous and their abilities can be very different from each other due to their diverse background, culture, and profession. In this work, we consider heterogeneous users with the following framework: (1) the users are heterogeneous and each of them has a private type indicating the best quality of the content she can generate; (2) there is a fixed amount of reward to allocate to the participated users.  Under this framework, we study the existence of pure Nash equilibrium of several mechanisms composed by different allocation rules, action spaces, and information settings. We prove the existence of PNE for some mechanisms and the non-existence of PNE for some mechanisms. We also discuss how to find a PNE for those mechanisms with PNE either through a constructive way or a search algorithm.
\end{quote}
\end{abstract}

\newcommand{\myeqref}[1]{Eqn. \eqref{#1}}

\section{Introduction}
More and more Internet websites rely on users' contribution to collect high-quality content, including knowledge-sharing services (e.g., Yahoo! Answers and Quora) , online product commenting and rating services (e.g., Yelp, mobile app stores), and ecommerce websites (e.g., Amazon.com). For simplicity, we call websites that rely on User-Generated Content  UGC websites. To attract more users and to incentivize them to contribute high-quality content, those sites usually give high-quality contributors some reward in the form of virtual value, which represents privilege and benefit, or monetary return, such as the gift card. To collect more reward, users usually strategically interact with those websites. Therefore, to maximize the quality of the content generated from users, a UGC website needs to carefully design their mechanisms and analyze users' behaviors. We call the mechanisms used by those UGC sites UGC mechanisms.

Recently, much effort has been devoted to the design and analysis of UGC mechanisms  \cite{Ghosh:2011:GAR:1993574.1993603,Ghosh:2011:IHU:1963405.1963428,Easley:2013:IGG:2492002.2482571}. Most of those works assume that users are homogeneous - people are of the same ability while contributing to the sites. However, in the real world, users' abilities can be very different from each other due to their diverse background, culture, and profession. For example, an experienced photographer can write a high-quality comment to a photo, which is very difficult for a non-experienced user. Thus, in this work, we study the game theoretical problem raised in Internet services with heterogeneous users. We introduce the concept of ``type'' for the problem,  which denotes the ability of a user: the larger the type of a user is, the better content she can contribute to the site. We further assume that each user needs to afford a cost to participate in the game and contribute content. The cost reflects the effort of content generation (e.g. writing a review), such as time and  mobile traffic. In our work, we assume users costs are bounded which is different from \cite{Easley:2013:IGG:2492002.2482571,Ghosh:2011:IHU:1963405.1963428}. We believe that our setting is more practical because nobody would spend too much (if not infinite) effort to make contributions.

Two allocation rules are studied in our work: one is the top $K$ allocation rule\cite{Jain:2009:DIO:1566374.1566393,Easley:2013:IGG:2492002.2482571} in which users with the highest $K$ qualities will get the reward equally. The other rule is proportional-share rule\cite{Ghosh:2011:IHU:1963405.1963428,Jain:2009:DIO:1566374.1566393,Nisan:2007:AGT:1296179,chen2009essays}, in which all the participants (who make non-zero contribution) will share the reward proportionally to their contributed qualities. The proportional share rule is also widely used in network rate control\cite{kelly1997charging,kelly98ratecontrol}, market allocation\cite{cachon1999equilibrium} and scheduling\cite{sandholm2010dynamic,563725}. Two action spaces are investigated: the binary action space, in which each user can only choose to participate in or not; and the continuous action space, in which each user can choose the quality of the content to contribute. Besides, we study the problem from both the full-information setting and the partial-information setting.

We study the existence of pure Nash equilibrium for several different UGC mechanisms by combining the above options.   Our main results can be summarized as follows.
\begin{enumerate}
\item For the \emph{full-information setting}, we prove the existence of PNE for the mechanism with the proportional allocation rule and the continuous action space. The key of the proof is to construct a perturbed game, prove the existence of PNE for the perturbed game, and prove that the PNE of the perturbed game will converge to the equilibrium of the original game. We then discover several properties of the PNE of the mechanism, which are further used to design an algorithm to find a PNE for the mechanism.  We also study three other mechanisms and show (1) the existence of PNE for the mechanism with the top $K$ allocation rule and the binary action space and for the mechanism with the proportional allocation rule and the binary action space and (2) the non-existence of PNE for the mechanism with the top $K$ allocation rule and the continuous action space.
\item For the \emph{partial-information setting}, we prove the existence of a symmetric PNE for the mechanism with the top $K$ allocation rule and the continuous action space. The key of the proof is to construct a simple but (maybe) infeasible symmetric strategy and then convert it to a feasible symmetric equilibrium strategy by repeated calibration. Our proof also provides a method to construct a symmetric PNE.  For the binary action space, we prove the existence of equilibrium for the mechanisms with both top $K$ allocation rule and proportional allocation rule.
\end{enumerate}

\section{Related work}
Recently UGC mechanisms have attracted much research attention \cite {Anderson:2013:SUB:2488388.2488398,Chawla:2012:OCC:2095116.2095185,Ghosh:2011:IHU:1963405.1963428,Ghosh:2011:GAR:1993574.1993603,Ghosh:2013:IPO:2492002.2482587}. Most of the existing work concentrates on homogeneous settings, i.e. users are of the same ability to generate content for the website.
  \cite{Ghosh:2011:IHU:1963405.1963428} designs a simple voting rule under sequential and simultaneous model, in which both the quality of contributions and the number of contributors are endogenously determined. \cite{Ghosh:2011:GAR:1993574.1993603} studies  the rank based allocation mechanism for websites with user-generated content and shows the mechanism always incentivizes higher quality equilibriums than the proportional allocation rule.  \cite{Ghosh:2013:IPO:2492002.2482587} models the online education forums with two parameters which represent the frequency of checking forums by teachers and students separately. A brief survey about UGC mechanisms can be found in \cite{Ghosh:2012:SCU:2509002.2509006}.

  The key differences between those existing literature and our work are that (1) we focus on heterogeneous users who have different abilities to generate content, and (2) we assume users' effort is bounded, i.e., users cannot afford an infinity cost to make contributions. One closely related work is \cite{Easley:2013:IGG:2492002.2482571}, which studies the problem of badge design. Although users' abilities are considered in that work, the participants are modeled as a continuum and as a result, the single user's behavior will not affect others' payoff much. In this work, we regard users as discrete individuals, and one's strategy will impact others' utilities.

\section{Model}
In this section, we describe the model for analyzing the incentives created by various UGC mechanisms, when contributors are strategic agents with heterogeneous abilities, and when the decision of whether to participate in and how much to contribute is a strategic choice.

There is a set of $N$ strategic users in a UGC site, and each user $i$ has a private type $q_i\in[0,1]$, which indicates the best quality of the content the user can contribute to the site. Without loss of generality, we number the users according to the descending order of their types, i.e. $q_1 \geq q_2 \ldots \geq q_N$.
Let $x_i$ denote user $i$'s action, which indicates the quality of the content she actually contributes to the site. Note that we have $0\le x_i \le q_i$. The user needs to afford a cost $c_i$ for the action $x_i$. In this work, we consider linear cost for simplicity: $c_i=c\frac{x_i}{q_i},$ where $c$ is a same upper bound\footnote{If each user has a different cost upper bound $C_i$, it is easy to absorb $C_i$ into the private type $q_i$ by scaling: $q_i=\frac{q_ic}{C_i}$. } of the cost that a user can afford.

We study two action spaces in this work. The first one is a binary action space: each user can only choose not to contribute or to contribute content with quality $q_i$ (i.e, $x_i\in \{0, q_i\}$). The second one is a continuous action space: the quality $x_i$ that user $i$ contributes to the site is a continuous value between $0$ and $q_i$ (i.e., $x_i\in [0,q_i]$). Note sometimes we say that a user does not participate in the game if $x_i=0$, and a user participates in the game if $x_i>0$.

The site has a fixed number of reward $R$ to allocate to the contributors, depending on their contributions. We study two allocation rules: the top $K$ allocation \cite{Jain:2009:DIO:1566374.1566393} and the proportional allocation \cite{Ghosh:2011:IHU:1963405.1963428}. The first one allocates $\frac{R}{K}$ to each user of those who contribute the top $K$ largest qualities. Note that if $N<K$, each user can still only get $\frac{R}{K}$ reward. The second one allocates the reward to all users proportional to their contributions: the reward $r_i$ allocated to user $i$ is $\frac{x_i}{\sum_jx_j}R$ if $x_i > 0$ and $0$ if $x_i=0$.

While analyzing the model, we consider two settings: the full-information setting and the partial-information setting. In the full-information setting, the types $\{q_i\}_{i\in[N]}$ are deterministic and are known to all the users. In the partial-information setting, the type of each user is assumed to be drawn from a publicly known distribution $F$, the first order derivative of which is continuous, and each user only knows her own type $q_i$.

With the above notations, the utility of user $i$ can be written as $u_i(x_i,x_{-i})=r_i(x_i,x_{-i})-c_i(x_i),$ where $r(x_i,x_{-i})$ is the reward of user $i$ given her strategy $x_i$ and the strategies $x_{-i}$ of other players. We assume that all the users are rational and they want to maximize their (expected) utilities.

\section{Full-information Setting}
In this section, we study the mechanisms under the full-information setting. Recall that in this setting, the type $q_i$ of any user is known to all the users. This setting corresponds to the real-world scenarios where the users are familiar with each other. For example, considering a professional mathematical question posted in Yahoo! Answer, there will be only a few users in the Yahoo! Answer community who can answer the question and they know each other quite well.

Combining the different choices of the allocation rule and the action space, there are four mechanisms under this setting. We focus on the mechanism with the proportional allocation rule and the continuous action space here and directly list the results of the other three, which are relatively easy to analyze.

\subsection{$\mathcal{M}_1$: Top $K$ Allocation, Binary Action Space}
It is not difficult to see that PNE exists and is unique (except the case $R=Kc$) for this scheme. There are three cases depending on the parameters ($R, K, c$):
\begin{enumerate}
\item If $R<Kc$, no user will contribute content to the site and the equilibrium is $x_i=0, \forall i$.
\item If $R=Kc$, there are many equilibria: any group of $k\le K$ users contributing to the site is an equilibrium.
\item if $R> Kc$, the first $K$ users will contribute to the site. That is, $x_i=q_i, \forall 1\le i \le K$ and $x_i=0, \forall i>K$.
\end{enumerate}

\subsection{$\mathcal{M}_2$: Top $K$ Allocation, Continuous Action Space}
There does not exist PNE for this mechanism under the full-information setting according to the following discussions.
\begin{enumerate}
    \item Zero-participating and more than $K$ people participating in are obviously not an equilibrium.
    \item If the equilibrium is constructed by fewer than $K$ people participating in, there is at least one person would could get positive utility by making a positive contribution. This contradicts with the concept of equilibrium.
    \item If there are $K$ people participating in, all the contributors will generate contents with quality $\epsilon \rightarrow 0$ but $\epsilon \ne 0$, so the equilibrium strategy does not exist.
\end{enumerate}

\subsection{$\mathcal{M}_3$: Proportional Allocation, Binary Action Space}
It turns out that PNE exists and there can be multiple equilibria for this mechanism.
\begin{enumerate}
\item If $R<c$, there exists a unique equilibrium in which nobody will contribute: $x_i=0, \forall i$.
\item If $R=c$, multiple equilibria exist: (1) nobody contributing is an equilibrium, and (2) any single user contributing is also an equilibrium.
\item If $R>c$, Nash equilibrium exists. Denote $j$ as the index satisfying the following two inequalities:
$$\frac{R q_j}{\sum_{k=1}^{j}q_k} > c,$$
and $$\frac{R q_{j+1}}{\sum_{k=1}^{j+1}q_k} \leq c.$$
Then $x_i=q_i, \forall 1\le i \le j$ and $x_i=0, \forall i>j$ compose an equilibrium.
There can exist multiple equilibria. Consider an example with parameters $N=3,R = 4,c = 1,\bm{q} = \{0.9247 , 0.3421,0.3095\}$. One can verify that both $\bm{x} = \{0.9247,0.3421,0\}$ and $\bm{x} = \{0.9247,0,0.3095\}$ are both equilibria.
\end{enumerate}

\subsection{$\mathcal{M}_4$: Proportional Allocation, Continuous Action Space}
We first prove the existence of PNE and then present an algorithm to search the PNE for the mechanism $\mathcal{M}_4$.
\begin{theorem}
  For the full-information setting, there exists a PNE for the mechanism $\mathcal{M}_4$.
  \label{thm:equiexist_var_por}
\end{theorem}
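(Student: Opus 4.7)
The plan is to address the single discontinuity of the proportional-allocation utility---which occurs only at the all-zero profile---by a perturbation-and-limit argument. The utility $u_i(x_i,x_{-i}) = R x_i/\sum_j x_j - c x_i/q_i$ (with $r_i := 0$ when $x_i = 0$) is jointly continuous everywhere on $\prod_j [0,q_j]$ except at $\bm{0}$, and is strictly concave in $x_i$ away from that point, so Debreu-Glicksberg-Fan/Rosen-type existence theorems cannot be invoked directly. I therefore introduce, for each $\epsilon > 0$, the perturbed mechanism $\mathcal{M}_4^\epsilon$ with smoothed utility
\begin{equation*}
u_i^\epsilon(x_i, x_{-i}) = \frac{R x_i}{\epsilon + \sum_j x_j} - \frac{c x_i}{q_i},
\end{equation*}
which is jointly continuous on the compact convex product $\prod_j [0,q_j]$ and strictly concave in $x_i$ (direct computation gives $\partial^2 u_i^\epsilon/\partial x_i^2 < 0$). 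The Debreu-Glicksberg-Fan theorem then yields a PNE $\bm{x}^{(\epsilon)}$ of $\mathcal{M}_4^\epsilon$.

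Next, pick a sequence $\epsilon_n \downarrow 0$; by compactness some subsequence of $\bm{x}^{(\epsilon_n)}$ converges to a limit $\bm{x}^*$. To verify $\bm{x}^*$ is a PNE of $\mathcal{M}_4$ I pass the perturbed inequality $u_i^{\epsilon_n}(x_i^{(\epsilon_n)}, x_{-i}^{(\epsilon_n)}) \geq u_i^{\epsilon_n}(x_i', x_{-i}^{(\epsilon_n)})$ to the limit. The right-hand side converges to $u_i(x_i', x_{-i}^*)$ because for every fixed $x_i'$ the map $x_{-i} \mapsto u_i(x_i', x_{-i})$ is continuous (the only discontinuity of $u_i$ is in the $x_i$-direction, and only when $x_{-i} = \bm{0}$). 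The left-hand side converges to $u_i(x_i^*, x_{-i}^*)$ provided the aggregate $S^{(\epsilon_n)} := \sum_j x_j^{(\epsilon_n)}$ stays bounded below, so the denominator $\epsilon_n + S^{(\epsilon_n)}$ does not collapse.

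The main obstacle is therefore a uniform lower bound $S^{(\epsilon_n)} \geq s_0 > 0$, which I plan to establish by a budget-versus-best-response comparison. Let $\alpha^{(i)}_n := \epsilon_n + \sum_{j \neq i} x_j^{(\epsilon_n)}$. Whenever $\alpha^{(i)}_n < R q_i / c$ the interior FOC for $u_i^\epsilon$ is admissible, and a short calculation shows user $i$'s best-response utility equals $(\sqrt{R} - \sqrt{c \alpha^{(i)}_n / q_i})^2$, which tends to $R$ as $\alpha^{(i)}_n \to 0$. Since $\bm{x}^{(\epsilon_n)}$ is a PNE, user $i$'s equilibrium payoff equals exactly this value. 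On the other hand the total perturbed reward is capped by the budget, $\sum_i u_i^{\epsilon_n}(\bm{x}^{(\epsilon_n)}) \leq R$. If $S^{(\epsilon_n)} \to 0$, then $\alpha^{(i)}_n \to 0$ for every $i$, each user's payoff approaches $R$, and their sum approaches $NR$, contradicting the budget bound whenever $N \geq 2$. Hence $S^{(\epsilon_n)}$ admits a positive lower bound depending only on $R, c$ and $\{q_i\}$, and the limit transfer goes through.

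I expect the lower-bound lemma to be the technical hurdle, in particular verifying that near the all-zero profile no user's best response sits at a corner $x_i \in \{0, q_i\}$ so that the closed-form payoff $(\sqrt{R} - \sqrt{c \alpha^{(i)}_n / q_i})^2$ really applies; the remaining steps are a routine combination of a Glicksberg-type existence result and a compactness-based limit argument.
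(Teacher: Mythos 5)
Your proof is correct, and while it shares the paper's high-level template---perturb the game, extract an equilibrium of the perturbed game, show the aggregate contribution is bounded away from zero, pass to the limit---each step is executed by a genuinely different device. The paper perturbs the \emph{action space} (a floor $x_i\ge\epsilon$), applies Brouwer's theorem to the explicitly clipped best-response map, rules out convergence to $\bm{0}$ by solving the coupled best-response equations of users $1$ and $2$ in closed form, and verifies the limit is an equilibrium by a three-way case analysis on the type of deviation. You perturb the \emph{utility} (adding $\epsilon$ to the denominator), which restores joint continuity and strict concavity so that Debreu--Glicksberg--Fan applies off the shelf; your non-degeneracy step is the budget argument (each equilibrium payoff equals the interior best-response value $(\sqrt{R}-\sqrt{c\alpha^{(i)}_n/q_i})^2\to R$, yet the payoffs sum to at most $RS^{(\epsilon_n)}/(\epsilon_n+S^{(\epsilon_n)})\le R$, giving $NR\le R$); and your limit transfer is the direct passage of the equilibrium inequality, which avoids the case analysis entirely. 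The corner issue you flag as the remaining hurdle is not actually one: as $\alpha^{(i)}_n\to 0$ the unconstrained maximizer $\sqrt{Rq_i\alpha^{(i)}_n/c}-\alpha^{(i)}_n$ tends to $0$, so it lies in the open interval $(0,q_i)$ for all large $n$ and the closed-form payoff is exact. Your route is more modular and arguably cleaner (the paper's two-user computation and deviation case analysis are replaced by two one-line observations); its only costs are invoking the heavier Glicksberg machinery in place of bare Brouwer and, exactly as in the paper's own argument, the implicit assumption $N\ge 2$ (for $N=1$ no PNE exists, as the paper's first observation about $x_{-i}=0$ already shows).
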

\begin{proof}
Consider an action profile $\{x_i\}_{i\in[N]}$. Denote $x_{-i} = \sum_{j \ne i}x_j$.  If $\sum_{i=1}^{N}x_i >0$, the utility of user $i$ is
\begin{small}
\begin{equation}
  u_i(x_i , x_{-i}) = R\frac{x_i}{x_i + x_{-i}} - \frac{x_i}{q_i}c,
\end{equation}
\end{small}
constrained by $0 \leq x_i \leq q_i$.

The first order derivative of $u_i$ w.r.t $x_i$ is
\begin{small}
\begin{equation}
  \frac{\partial{u_i}}{\partial{x_i}} = R\frac{x_{-i}}{(x_i + x_{-i})^2} - \frac{c}{q_i}.
  \label{eq:foc_utility}
\end{equation}
\end{small}
By setting the above derivative to zero, we get the best response strategy of user $i$:
\begin{small}
\begin{equation}
  x_i(q_i  , x_{-i}) = \sqrt{\frac{Rq_ix_{-i}}{c}} - x_{-i}.
  \label{eq:bstResponse3}
\end{equation}
\end{small}
We have the following observations for this best response strategy:
\begin{enumerate}
\item Clearly $x_i = 0$ is not the best response when $x_{-i} = 0$ because user $i$ would not be rewarded with $x_i=0$. Actually there is not a best response for user $i$ when $x_{-i}=0$:  if she gets a positive utility by contributing $\delta > 0$, she will profitably deviate by contributing $\frac{\delta}{2}$. Therefore zero-contribution ($x_i=0, \forall i$) is not an equilibrium strategy but it is a fixed point for \myeqref{eq:bstResponse3}.
\item $x_i$ calculated from \myeqref{eq:bstResponse3} can be smaller than zero or larger than $q_i$, which is not a feasible action. If $x_i(q_i,x_{-i}) < 0$, it means that \myeqref{eq:foc_utility} will be smaller than zero when $x_i > 0$, so it is better to make zero contribution. If $x_i(q_i,x_{-i}) \geq q_i$, \myeqref{eq:foc_utility} will be larger than $0$, so $u_i(x_i,x_{-i})$ increases w.r.t. $x_i$. Therefore it is better to contribute $q_i$.
\end{enumerate}
Based on the two observations, we consider a perturbed game \cite{journals/tpds/FeldmanLZ09}, in which the action of each user is lower bounded by a small positive quality $\epsilon$ and the best response strategy of $i$ is as below.
\begin{equation}
  x_i^*(q_i , x_{-i} , \epsilon) = \left\{
    \begin{aligned}
      &\epsilon &\textrm{if}\;x_i(q_i , x_{-i}) \leq \epsilon\\
      &x_i(q_i , x_{-i}) &\textrm{if}\;\epsilon <x_i(q_i , x_{-i}) <q_i\\
      &q_i &\textrm{if}\;x_i(q_i , x_{-i}) \geq q_i
    \end{aligned}
  \right.
  \label{eq:equiEquations}
\end{equation}
where $x_i(.,.)$ is defined in \myeqref{eq:bstResponse3}.
In the remaining part of the proof, we show that (1) there exists a PNE for the perturbed game; (2) as $\epsilon \rightarrow 0$, any PNE of the perturbed game will not converge to zero contribution point and (3) by setting $\epsilon \rightarrow 0$, we get the PNE for the original game.

Denote the space $[0,q_1] \times [0 , q_2] \times ... \times [0 , q_N]$ as $X$, which is convex and compact. Define a mapping $f$ from $X$ to itself, in which for any fixed $\epsilon$, $\forall i \in [N]$, $f_i(\bm{x},\epsilon) = x^*_i(q_i,x_{-i},\epsilon)$. It is easy to verify that $f$ is a continuous mapping. According to Brouwer fixed-point theorem\cite{Border1}, we know $f$ has at least one fixed-point in $X$, which is the equilibrium of the perturbed game. Denote one fixed point as\footnote{We denote vectors with bold face letters in this paper.} $\bm{x}^{\epsilon}$.  Since $X$ is compact, we could always find a series of $\{\epsilon_n\} \rightarrow 0$ with their corresponding $\bm{x}^{\epsilon_n}$ converging. Denote the limit point as $\bm{x}^0$.

Next we show $\bm{x}^0$ is not the zero contribution point. Otherwise, $\sum_{i=1}^{N}x_i^{\epsilon_n}$ could be infinitely close to zero as $n \rightarrow \infty$. What's more, both $x_1^{\epsilon_n}$ and $x_2^{\epsilon_n}$ are strictly less than the $q_1$ and $q_2$ respectively. We set $\sum_{i=3}^{N}x_i^{\epsilon_n} = \delta_n$ and $Q = \frac{c}{q_1} + \frac{c}{q_2}$. By \myeqref{eq:bstResponse3} we obtain:
\begin{small}
\begin{equation}
\begin{aligned}
  & \sqrt{\frac{Rq_1(\delta_n + x_2^{\epsilon_n})}{c}} - (\delta_n + x_2^{\epsilon_n}) = x_1^{\epsilon_n} \\
  & \sqrt{\frac{Rq_2(\delta_n + x_1^{\epsilon_n})}{c}} - (\delta_n + x_1^{\epsilon_n}) = x_2^{\epsilon_n}. \\
\end{aligned}
\label{eq:perturbed_x1x2}
\end{equation}
\end{small}
Add up the two equations in \myeqref{eq:perturbed_x1x2} and solve it, the positive root is
\begin{small}
\begin{equation}
  x_1^{\epsilon_n} +  x_2^{\epsilon_n} = \frac{1}{2}\big(\frac{R}{Q} + \sqrt{(\frac{R}{Q})^2 + 4\delta_n\frac{R}{Q}}\big) - \delta_n,
\end{equation}
\end{small}
 which will not tend to zero as $n \rightarrow \infty$. This contradicts with the assumption that $\sum_{i=1}^{N}x_i^{\epsilon_n}$ could be infinitely close to zero as $n \rightarrow \infty$.

Finally we prove $\bm{x}^0$ is the equilibrium strategy of the original game by contradiction. Three possible alternative cases need discussing. They are: user $j$ would like to deviate from 
\begin{itemize}
  \item $x_j^0 = 0$ to $x_j^{'} > 0$;
  \item $x_{j}^{0} \in (0,q_j)$ to some $x_j^{'}$ in $[0,q_j] - \{x_j^{0}\}$;
  \item $x_j^0 = q_j$ to $x_j^{'} < q_j$.
\end{itemize}
We just discuss the first case here. Similar method could be applied to prove other cases and we put them in the appendix.  Suppose user $j$ could profitably deviate by contributing a content with quality no less than $\delta(> 0)$.  Then we obtain
\begin{small}
  \begin{equation}
  \centering
  \sqrt{\frac{Rq_jx^{0}_{-j}}{c}} - x_{-j}^{0} \geq \delta.
  \label{eq:originalGNE}
  \end{equation}
\end{small}
Since $x^{0}_{j}=0,$ we know given any sufficiently small positive $\epsilon$, $\exists N^{'}$, $\forall n > N^{'}$, $x_j^{\epsilon_n}<\epsilon$, i.e.,
\begin{small}
  \begin{equation}
  \centering
  \sqrt{\frac{Rq_jx^{\epsilon_n}_{-j}}{c}} - x_{-j}^{\epsilon_n} < \epsilon.
  \label{eq:PerturbGNE}
  \end{equation}
\end{small}
$\bm{x}^{\epsilon_n} \rightarrow \bm{x}^0$ implies $x_{-j}^{\epsilon_n} \rightarrow x_{-j}^{0}$. But as $\epsilon \rightarrow 0$, we can verify $x_{-j}^{\epsilon_n}$ will not converge to $x_{-j}^0$ by \myeqref{eq:originalGNE} and \myeqref{eq:PerturbGNE}. This contradicts with $\bm{x}^{\epsilon_n} \rightarrow \bm{x}^0$. Details could be found in the appendix.



Therefore, there exists a PNE for the mechanism $\mathcal{M}_4$.
\end{proof}

Given the existence of PNE, we can obtain the following lemmas about the properties of an equilibrium profile, which will be used to find a PNE strategy. The proofs of the first two lemmas can be found in Appendix.
\begin{lemma}
  Consider two users with $q_i\le q_j$. If $x_i=q_i$ holds in an equilibrium, then $x_j=q_j$ holds in the same equilibrium.
  \label{lemma:equiType}
\end{lemma}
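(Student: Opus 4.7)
The plan is proof by contradiction. Suppose there is an equilibrium $\bm{x}^*$ with $x_i^* = q_i$ but $x_j^* < q_j$, and let $S = \sum_k x_k^*$. I would extract algebraic inequalities from each user's best-response characterization via \myeqref{eq:foc_utility}--\myeqref{eq:bstResponse3}, then close the argument using the obvious constraint that the remaining players' contributions are non-negative. The intuitive driver is that the marginal gain $R x_{-k}/S^2 - c/q_k$ is monotone in $q_k$, so higher-type users should if anything be more inclined to saturate.

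First, since $x_i^* = q_i$ is optimal at the upper boundary for user $i$, the right derivative of $u_i$ there is non-negative, which rewrites as $R q_i(S - q_i) \ge c S^2$. For user $j$ I would split into the interior case $x_j^* \in (0, q_j)$ and the corner case $x_j^* = 0$. In the interior case, \myeqref{eq:bstResponse3} forces $x_j^* = S - cS^2/(R q_j)$, and $x_j^* < q_j$ rewrites as $R q_j(S - q_j) < c S^2$.

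The crux is combining these with the simple identity $\sum_{k \ne i,j} x_k^* = S - q_i - x_j^* \ge 0$. Substituting the formula for $x_j^*$ gives $cS^2/(R q_j) \ge q_i$, i.e., $cS^2 \ge R q_i q_j$; chaining with $i$'s condition yields $R q_i(S - q_i) \ge R q_i q_j$, hence $S \ge q_i + q_j$. On the other hand, $R q_i(S - q_i) \ge cS^2 > R q_j(S - q_j)$ gives $q_i(S - q_i) > q_j(S - q_j)$. If $q_i = q_j$ this is immediately absurd; if $q_i < q_j$ strictly, rearranging and dividing by the negative quantity $q_i - q_j$ yields $S < q_i + q_j$, contradicting the previous bound.

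The corner sub-case $x_j^* = 0$ is easier: optimality at zero requires $R q_j \le c S$ (otherwise the marginal at $0^+$ is positive), so $R q_i \le R q_j \le c S$, and substituting into $R q_i(S - q_i) \ge c S^2$ forces $q_i \le 0$, contradicting $q_i > 0$ (the lemma is vacuous when $q_i = 0$ since then $x_i = q_i$ holds automatically). The only subtle point I anticipate is recognizing that pairing the FOCs of $i$ and $j$ alone merely gives $x_j^* \ge q_i$, which is perfectly consistent with $x_j^* < q_j$ and does not close the argument; one must additionally invoke the non-negativity of the other players' contributions to pin down the lower bound $S \ge q_i + q_j$.
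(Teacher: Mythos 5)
Your proof is correct and takes essentially the same route as the paper's: a contradiction built from the upper-corner condition $Rq_i(S-q_i)\ge cS^2$ for the saturating low-type user, the first-order condition for the non-saturating high-type user, and the non-negativity of the remaining users' contributions (the paper packages these same three ingredients into a single inequality whose two sides have opposite signs, rather than deriving the two incompatible bounds $S\ge q_i+q_j$ and $S<q_i+q_j$). A minor point in your favor is that you treat the sub-case $x_j^*=0$ explicitly, which the paper's proof glosses over by writing the high-type user's best response as an equality valid only in the interior.
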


Given Mechanism $\mathcal{M}_4$, in which $N$ users compete for the reward $R$, we can induce a \emph{local game} with $m$ users: only the first $m$ users compete for the reward $R$. As shown in the following lemma, an equilibrium of the induced local game can be connected to the equilibrium of the original game (i.e., Mechanism $\mathcal{M}_4$ with $N$ users) under certain condition.

\begin{lemma}
If $\{x_i\}_{i\in[m]}$ is an equilibrium of the induced local game with the first $m$ users and $\sum_{i = 1}^{m}x_i \geq \frac{Rq_{m+1}}{c}$,  then $\{y_i\}_{i\in[N]}$, where $y_i=x_i, \forall 1\le i\le m$ and $y_i=0, \forall m+1\le i\le N$, is an equilibrium of the original game.
\label{lemma:isequi}
\end{lemma}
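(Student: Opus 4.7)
The proof plan is to verify the two types of deviation separately, using the fact that the proportional-allocation utility $u_i(x_i,x_{-i})$ depends on other players' actions only through their sum.

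First I would handle the ``retained'' players $i\in[m]$. Let $S_{-i}^{\text{loc}}=\sum_{k\in[m],\,k\ne i}x_k$ denote the sum of others' contributions in the induced local game, and let $S_{-i}^{\text{orig}}$ denote the same sum in the extended profile $\{y_i\}_{i\in[N]}$. Since $y_{m+1}=\cdots=y_N=0$, these two sums coincide. Because $u_i$ only sees $x_{-i}$ through this sum (see \myeqref{eq:foc_utility} and \myeqref{eq:bstResponse3}), user $i$'s best-response correspondence in the original game at $\bm{y}_{-i}$ is identical to her best-response in the local game at $\bm{x}_{-i}$. By hypothesis $x_i$ is already a best response in the local game, so $y_i=x_i$ is a best response in the original game as well.

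Second I would handle the ``added'' players $j>m$, which is the real content. Set $S=\sum_{i=1}^{m}x_i$. I need to show that $y_j=0$ is a best response for user $j$ against $\bm{y}_{-j}$, whose entries sum to $S$. Following the best-response derivation in the proof of Theorem~\ref{thm:equiexist_var_por}, the interior stationary point is $\sqrt{Rq_jS/c}-S$, and from Observation~(2) there we know $x_j=0$ is optimal whenever this quantity is $\le 0$, i.e.\ whenever $S\ge Rq_j/c$. Since the users are ordered by decreasing type, $q_j\le q_{m+1}$ for every $j>m$, so the hypothesis $S\ge Rq_{m+1}/c$ gives $S\ge Rq_j/c$. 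Equivalently, for any $x_j\in(0,q_j]$,
\begin{equation*}
u_j(x_j,S)=x_j\left(\frac{R}{x_j+S}-\frac{c}{q_j}\right)\le x_j\left(\frac{R}{S}-\frac{c}{q_j}\right)\le 0=u_j(0,S),
\end{equation*}
so $y_j=0$ is indeed a best response.

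Combining the two parts, no player in $[N]$ has a profitable unilateral deviation from $\bm{y}$, which is exactly what it means for $\bm{y}$ to be a PNE of the original game. The only subtle step is making sure the comparison $S\ge Rq_j/c$ is tight at the boundary $j=m+1$, which is precisely why the hypothesis is stated with the $(m{+}1)$-th type; the monotonicity $q_j\le q_{m+1}$ then extends it to all remaining users automatically. I do not expect any real obstacle here — the argument is a direct use of the best-response formula already established — the only thing to be careful about is handling the non-strict inequality at $j=m+1$, which the displayed chain of inequalities above covers cleanly.
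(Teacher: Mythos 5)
Your proposal is correct and follows essentially the same route as the paper: the added users $j>m$ are handled by the chain $u_j(x_j,S)\le x_j(R/S - c/q_j)\le x_j(R/S - c/q_{m+1})\le 0$ using the hypothesis $S\ge Rq_{m+1}/c$ and the type ordering, and the retained users are handled by noting that the zero contributions of users $m+1,\dots,N$ leave their utility functions unchanged from the local game. No substantive difference from the paper's argument.
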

\begin{proof}
If user $j$($\geq m+1 $) deviates by contributing $z>0$, her utility is
\begin{equation}
\begin{aligned}
u_j &= \frac{Rz}{\sum_{i=1}^{m}x_i + z} - \frac{z}{q_j}c < z(\frac{R}{\sum_{i=1}^{m}x_i} - \frac{c}{q_j}) \\
    & \leq z(\frac{R}{\sum_{i=1}^{m}x_i} - \frac{c}{q_{m+1}}) \\
    & =\frac{cz}{q_{m+1}\sum_{i=1}^{m}x_i}(\frac{Rq_{m+1}}{c} - \sum_{i=1}^{m}x_i) \leq 0.
\end{aligned}
\end{equation}
That is, user $j(j \geq m + 1)$ cannot be better off by unilaterally deviating. The original game is the same as the local game for the first $m$ users, and they will not deviate unilaterally.  Therefore $\{y_i\}_{i\in[N]}$ is an equilibrium of the original game.
\end{proof}

Based on the previous lemmas, we propose Algorithm \ref{alg:searchEqui2} to find a PNE for a local game induced from the original game and verify whether it is a PNE of the original game by Lemma. \ref{lemma:isequi}. Then we discuss how to find a PNE for mechanism $\mathcal{M}_4$ using Algorithm \ref{alg:searchEqui2}.

\renewcommand{\algorithmicrequire}{\textbf{Input:}}
\renewcommand{\algorithmicensure} {\textbf{Output:} }
\begin{algorithm}[!htb]
\caption{The algorithm to find a PNE of the original game from an induced local game}
\begin{algorithmic}[1]
\REQUIRE~~ $\bm{q}=(q_1, q_2 , \ldots q_n)$ where $q_1 \geq q_2 \geq \ldots \geq q_n$;\\
\ENSURE~~ $\bm{x}^{(n)}=(x_1^{(n)},x_2^{(n)},\ldots,x_n^{(n)})$
\STATE Calculate the $y_{ni}\;\forall i \in [n]$ with \myeqref{eq:algsyms1}. If none of the $y_{ni}$ is smaller than zero or larger than the corresponding $q_i$, $\bm{x}^{(n)} \leftarrow \bm{y}_{n}$; verify whether it is a PNE of the original game by Lemma. \ref{lemma:isequi}; if so, return $\bm{x}^{(n)}.$\
\FOR{$m \leftarrow 1 : n$}\label{step:for1}
\STATE Calculate $x_{nim}\;\forall i \in \{m+1,...,n\}$ with \myeqref{eq:algsyms2}. $x^{(n)}_i \leftarrow q_i\;\forall i \in \{1,...m\}$,  $x^{(n)}_i \leftarrow x_{nim}\;\forall i \in\{m+1...n\}$ if they are all feasible;\
\IF{$\bm{x}^{(n)}$ is a local PNE(verified by Lemma.\ref{lemma:isfeasible})}
\STATE Verify whether $\bm{x}^{(n)}$ is a PNE of the original game and return it if so;\
\ENDIF
\ENDFOR
\end{algorithmic}
\label{alg:searchEqui2}
\end{algorithm}

\begin{lemma}
In Algorithm \ref{alg:searchEqui2}, $\bm{x}^{(n)}$ is a PNE of the local induced game with $n$ users if the following condition holds:
\begin{itemize}
  \item If $m<n$,
  \begin{small}
  \begin{equation*}
  \frac{Rq_{1}}{2c}(1 - \sqrt{1 - \frac{4c}{R}}) \leq \sum_{k=1}^{n}x_i^{(n)} \leq \frac{Rq_{m}}{2c}(1 + \sqrt{1 - \frac{4c}{R}});
  \end{equation*}
  \end{small}
  \item If $m=n$,
  \begin{small}
  \begin{equation*}
  R \geq \frac{cQ_m^2}{q_i(Q_m - q_i)} \forall i \in \{1,m\}  .
  \end{equation*}
  \end{small}
\end{itemize}
\label{lemma:isfeasible}
\end{lemma}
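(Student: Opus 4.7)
The plan is to verify directly that $\bm{x}^{(n)}$ is a Nash equilibrium of the $n$-user local game by invoking the best-response characterization used in the proof of Theorem~\ref{thm:equiexist_var_por}. Since $u_i$ is strictly concave in $x_i$ on $[0,q_i]$ (the second derivative of the reward term equals $-2Rx_{-i}/(x_i+x_{-i})^3<0$ whenever $x_{-i}>0$), the unique maximizer on $[0,q_i]$ is given by the piecewise rule in \myeqref{eq:equiEquations}: either $0$, the FOC solution $x_i(q_i,x_{-i})$ from \myeqref{eq:bstResponse3}, or $q_i$, depending on where the FOC solution lies. So checking best response reduces to checking, for each user, that the regime assigned by the algorithm matches the regime forced by the FOC.

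For users $i\in\{m+1,\ldots,n\}$, Algorithm~\ref{alg:searchEqui2} sets $x_i^{(n)}=x_{nim}$ to be exactly the FOC value, and the feasibility check guarantees $x_{nim}\in(0,q_i)$, so these users are automatically at their unique best response. The remaining issue is the boundary users $i\in\{1,\ldots,m\}$ who sit at $x_i^{(n)}=q_i$. By concavity, $q_i$ is a best response iff $\partial u_i/\partial x_i\ge 0$ at $x_i=q_i$, which, using $x_{-i}=S-q_i$ with $S=\sum_{k=1}^n x_k^{(n)}$, is equivalent to
$$\sqrt{Rq_i(S-q_i)/c}\;\ge\;S.$$
Squaring (all quantities are positive) produces the quadratic condition $cS^2-Rq_iS+Rq_i^2\le 0$ that must hold simultaneously for every $i\in\{1,\ldots,m\}$.

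For the case $m=n$, we have $S=Q_m$, and the quadratic condition rearranges immediately to $R\ge cQ_m^2/\bigl(q_i(Q_m-q_i)\bigr)$ for every $i\in\{1,\ldots,m\}$, which is precisely Case~2. For the case $m<n$, solving the quadratic in $S$ yields
$$\frac{Rq_i}{2c}\bigl(1-\sqrt{1-4c/R}\bigr)\;\le\; S \;\le\; \frac{Rq_i}{2c}\bigl(1+\sqrt{1-4c/R}\bigr),$$
and since both endpoints are strictly increasing in $q_i$, the intersection over $i\in\{1,\ldots,m\}$ is obtained by taking $q_1$ (the largest) for the lower endpoint and $q_m$ (the smallest among the top $m$) for the upper endpoint, giving Case~1. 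The argument is essentially algebraic once the reduction is made; the main point to get right is that the algorithm's construction already pins the users $i>m$ at their FOC, so only the boundary condition for the top-$m$ users requires separate verification.
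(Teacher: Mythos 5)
Your proposal is correct and follows essentially the same route as the paper: users $m+1,\ldots,n$ are at their (feasible, interior) first-order-condition values and hence automatically best-responding, while each boundary user $i\le m$ is best-responding at $q_i$ iff $\sqrt{Rq_i(S-q_i)/c}\ge S$, and solving the resulting quadratic in $S$ (intersected over $i$, using monotonicity in $q_i$) yields exactly the two stated conditions. The only difference is cosmetic: you justify the boundary criterion via strict concavity of $u_i$, whereas the paper appeals to its earlier observation that the utility is increasing on $[0,q_i]$ whenever the unconstrained best response exceeds $q_i$; both implicitly require $R\ge 4c$ for the square root to be real, which the paper notes explicitly.
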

\begin{proof}
If user $i$ contributes her type in equilibrium, we know the solution of
\begin{small}
  \begin{equation*}
  \sqrt{\frac{Rq_i{x_{-i}}}{c}} - x_{-i} \geq q_i
  \end{equation*}
\end{small}
is non-empty. And we can easily infer that $R \geq 4c$ is a necessary condition for that. We temporarily denote $\sum_{k=1}^{n}x_i^{(n)}$ as $x$.

When $m < n$, $x_j^{(n)} < q_j$ holds $\forall j \in \{m+1,...,n\}$. They could be regarded as the best response strategies without constraints. If anyone of the top $m$ users does not want to change her strategy unilaterally, the following inequality holds $\forall i \in [m]$:
\begin{small}
  \begin{equation}
  \sqrt{\frac{Rq_i(x - q_i)}{c}} - (x - q_i) \geq q_i
  \label{eq:verifyLocalPNE}
  \end{equation}
\end{small}

\myeqref{eq:verifyLocalPNE} suggests that
\begin{small}
\begin{equation*}
\frac{Rq_i}{2c}(1 - \sqrt{1 - \frac{4c}{R}}) \leq x \leq \frac{Rq_i}{2c}(1 + \sqrt{1 - \frac{4c}{R}}) \forall i \in [m],
\end{equation*}
\end{small}
i.e.
\begin{small}
  \begin{equation}
  \frac{Rq_1}{2c}(1 - \sqrt{1 - \frac{4c}{R}}) \leq x \leq \frac{Rq_m}{2c}(1 + \sqrt{1 - \frac{4c}{R}}).
  \end{equation}
\end{small}

When $m = n$, we know all of the $n$ people contribute their types. If nobody could profitably deviate unilaterally, we obtain $\forall i \in [m]$,
\begin{small}
\begin{equation}
    \sqrt{\frac{Rq_i(Q_m - q_i)}{c}} - (Q_m - q_i) \geq q_i,
\end{equation}
\end{small}
i.e.
\begin{small}
\begin{equation}
  R \geq \frac{cQ_m^2}{q_i(Q_m - q_i)} \forall i \in \{1,m\}
\end{equation}
\end{small}
\end{proof}

The functions used in Algorithm \ref{alg:searchEqui2} are listed  from \myeqref{eq:algsyms1} to \myeqref{eq:algsyms2} and their derivations are in the appendix.
\begin{align}
&y_{ni} = \frac{R(n-1)}{c\sum_{k =1}^n \frac{1}{q_k}}[1 - \frac{n-1}{q_i\sum_{k=1}^{n}\frac{1}{q_k}}] \label{eq:algsyms1} \\
&Q_{m} = \sum_{i=1}^{m}q_i\\
&A_{nm} = \sum_{i=m+1}^{n}\frac{c}{Rq_i}\qquad\forall n \leq N,n\geq m+1\\
&s_{nm}  = \frac{(n-m-1) + \sqrt{(n - m - 1)^2 + 4Q_mA_{nm}}}{2A_{nm}}\\
&x_{nim} = s_{nm} - \frac{c}{Rq_i}s_{nm}^2 \label{eq:algsyms2}
\end{align}

\begin{theorem}
  Recursively calling Algorithm \ref{alg:searchEqui2} from $n=2$ to $N$, it outputs a PNE of the original game.
\end{theorem}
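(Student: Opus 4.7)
The plan is to combine Theorem~\ref{thm:equiexist_var_por} with the structural Lemmas~\ref{lemma:equiType}, \ref{lemma:isequi}, and \ref{lemma:isfeasible}, and to argue that Algorithm~\ref{alg:searchEqui2} exhaustively enumerates every possible ``shape'' a local PNE can take. First, I would establish a structural claim: in any PNE of any local game on the first $n$ users, the set of saturated users $\{i : x_i = q_i\}$ is a prefix (this is exactly Lemma~\ref{lemma:equiType}), and the set of inactive users $\{i : x_i = 0\}$ is a suffix. The latter follows by rewriting the interior fixed-point condition as $S = \sqrt{Rq_i(S - x_i)/c}$ and squaring to get $x_i = S - cS^2/(Rq_i)$, which requires $S < Rq_i/c$; so if some user $i'$ with $q_{i'} > q_i$ plays $0$ we have $S \ge Rq_{i'}/c > Rq_i/c$, a contradiction. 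Hence every PNE has the shape: top $m$ users saturate at $q_i$, a middle block plays strictly inside $(0, q_i)$, and the bottom block plays $0$.

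Next, I would show that whenever the local game on the first $n$ users admits a PNE in which all $n$ users are active, Algorithm~\ref{alg:searchEqui2} invoked with input $n$ returns it. If no user saturates, summing the interior FOCs \eqref{eq:bstResponse3} and eliminating the aggregate $S = \sum_k x_k$ yields exactly the closed form $y_{ni}$ in \eqref{eq:algsyms1} that line~1 computes. If the top $m \ge 1$ users saturate, fixing $x_i = q_i$ for $i \le m$ and summing the remaining FOCs reduces the problem to a scalar quadratic in $S$, whose positive root yields the closed form \eqref{eq:algsyms2}; this is exactly what the for-loop iteration with parameter $m$ computes. Lemma~\ref{lemma:isfeasible} then certifies each candidate as a genuine local PNE whenever the feasibility check passes.

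I would then close the argument by invoking Theorem~\ref{thm:equiexist_var_por} to guarantee a PNE $\bm{x}^*$ of the full $N$-user game. Let $n^*$ denote the number of active users in $\bm{x}^*$; by the structural step these are precisely users $1, \ldots, n^*$, and moreover $n^* \ge 2$ because no profile with $x_{-i} = 0$ can be a PNE (observation~1 in the proof of Theorem~\ref{thm:equiexist_var_por}). Hence the restriction $(x_1^*, \ldots, x_{n^*}^*)$ is a PNE of the local game on the first $n^*$ users in which every user is active, and by the previous step Algorithm~\ref{alg:searchEqui2} with $n = n^*$ produces exactly this vector. The fact that user $n^* + 1$ plays $0$ in $\bm{x}^*$ is equivalent to $\sum_{i=1}^{n^*} x_i^* \ge R q_{n^* + 1}/c$, which is precisely the hypothesis of Lemma~\ref{lemma:isequi}; so the global verification step succeeds and the padded vector is returned. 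Since the outer recursion runs $n$ from $2$ up to $N$ and halts at the first successful return, it terminates no later than $n = n^*$ with a PNE of the original game.

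The main obstacle will be the exhaustiveness of the structural classification: one must rule out any PNE shape not of the form ``saturated prefix, interior middle, zero suffix,'' which relies on the strict concavity of $u_i$ in $x_i$ apparent from \eqref{eq:foc_utility} together with Lemma~\ref{lemma:equiType}. A secondary subtlety is that for a given candidate $m$ the closed form \eqref{eq:algsyms2} may produce an infeasible vector (some coordinate outside $[0, q_i]$) while a different value of $m$ yields the correct PNE; I would need to check that the loop in Algorithm~\ref{alg:searchEqui2} indeed visits every admissible $m$, and in particular that the value of $m$ corresponding to the saturated prefix of $\bm{x}^*$ is reached before the algorithm returns.
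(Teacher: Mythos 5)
Your proposal follows the same route as the paper's own proof: invoke Theorem~\ref{thm:equiexist_var_por} for existence, use Lemma~\ref{lemma:equiType} to pin down the shape of any PNE so that Step~1 (the $m=0$ case) together with the loop over $m$ exhaustively enumerates all candidates via the closed forms \eqref{eq:algsyms1}--\eqref{eq:algsyms2}, and use Lemmas~\ref{lemma:isfeasible} and~\ref{lemma:isequi} to certify a candidate as a local PNE and lift it to the original game. Your extra argument that the inactive users form a suffix (so the active set is a prefix and the restriction of a global PNE really is an all-active PNE of some induced local game) is a step the paper asserts only implicitly, and your justification of it via the interior condition $S < Rq_i/c$ is correct and a welcome addition.
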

\begin{proof}
  By Theorem \ref{thm:equiexist_var_por}, the induced local game has a PNE.

%

 Given an $n \leq N$, if the algorithm stops at Step 1, then $\bm{x}^{(n)}$ is a PNE of the original game.

 If the algorithm does not stop at Step 1, given the top $m$ people contributing their types, \myeqref{eq:algsyms2} could be seen as the strategies that people $m+1,...,n$ do not want to deviate if they are feasible. If the top $m$ people do not want to deviate neither,  we find a PNE of the induced local game. Lemma \ref{lemma:equiType} describes the structure of all the PNEs, and therefore Algorithm \ref{alg:searchEqui2} will traverse all the local PNEs of the induced game.  Further, we note that a PNE of the original game is also a PNE of some induced local game. So the PNE of the original game could certainly be found by Algorithm \ref{alg:searchEqui2}.
\end{proof}


\section{Partial-information Setting}
In this section, we investigate the existence of pure Nash equilibrium of UGC mechanisms under the partial-information setting. We discuss three mechanisms here  and leave another one (because of its difficulty) to the future work.

\subsection{$\mathcal{M}_5$: Top $K$ Allocation, Binary Action Space}
For simplicity, we only consider the case that $R>Kc$ and omit the marginal case $R\le Kc$ here.

Let function $T(x)$ denote the probability that a user with quality $x$ is in one of the top $K$ contributors. Clearly, if $N  - K \leq 0$, we have $T(x) = 1$; when $N  - K \geq 1$, we have
\begin{equation}
    T(x)=\sum_{j=0}^{K-1}\binom{N-1}{j}F(x)^{N-1-j}(1-F(x))^{j}.
    \label{eq:pdf_win}
\end{equation}
Intuitively, a user with higher quality and fewer competitors is more likely to get the reward:
\begin{lemma}
  $T(x)$ is a non-decreasing function of $x$.
  \label{lemma:Tfunction}
\end{lemma}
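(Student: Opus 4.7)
The plan is to reinterpret $T(x)$ probabilistically and then prove monotonicity either by direct differentiation or by a coupling argument. Specifically, if $Y$ denotes the number of the other $N-1$ users whose quality exceeds $x$, then $Y \sim \text{Binomial}(N-1, 1-F(x))$, and \eqref{eq:pdf_win} is exactly $T(x) = \Pr[Y \le K-1]$, the probability that fewer than $K$ competitors beat a user with quality $x$. Monotonicity in $x$ should then follow from the intuition that raising $x$ makes competitors stochastically less likely to beat the user.

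First I would handle the trivial case $N \le K$, in which $T(x)\equiv 1$ and the claim is immediate, and then restrict attention to $N \ge K+1$. For the nontrivial case, the cleanest route I see is direct differentiation. Writing $p = F(x)$ and using $F'(x)\ge 0$, I would compute $T'(x) = F'(x)\,\frac{d}{dp}\sum_{j=0}^{K-1}\binom{N-1}{j}p^{N-1-j}(1-p)^j$. The standard telescoping identity for the binomial CDF collapses the derivative of this sum into the single non-negative term $(N-1)\binom{N-2}{K-1}F(x)^{N-1-K}(1-F(x))^{K-1}$, from which $T'(x)\ge 0$ is immediate. The only work here is carefully lining up adjacent summands so that cross-terms cancel, which is routine but is the one spot where an off-by-one slip is easy.

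If I wanted to avoid calculus entirely, I would instead use a coupling: for $x_1 \le x_2$, assign to each of the $N-1$ other users an i.i.d.\ uniform variable $U_\ell \in [0,1]$ and declare user $\ell$ to ``beat'' quality $x$ iff $U_\ell > F(x)$. Since $F$ is non-decreasing, $\{U_\ell > F(x_2)\}\subseteq\{U_\ell > F(x_1)\}$, so the realized counts satisfy $Y(x_2)\le Y(x_1)$ almost surely, giving $T(x_2) = \Pr[Y(x_2)\le K-1] \ge \Pr[Y(x_1)\le K-1] = T(x_1)$. I expect the differentiation argument to be what appears in the appendix, with the telescoping step as the main (mild) bookkeeping obstacle.
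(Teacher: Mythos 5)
Your primary route---differentiating $T(x)$ and telescoping the binomial CDF sum down to the single non-negative term $(N-1)F'(x)\binom{N-2}{K-1}F(x)^{N-K-1}(1-F(x))^{K-1}$---is exactly the paper's proof, term for term. The coupling argument you sketch as a backup is a valid and arguably cleaner alternative, but it is not needed; your main argument is correct and matches the paper.
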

\begin{proof}
We only need to discuss the non-trivial case, i.e. $N  - K \geq 1$.
\begin{small}
\begin{equation}
  \begin{aligned}
      &  \frac{\partial{T(x)}}{\partial{x}} \\
       =& f(x)\Big\{ (N  - 1)F(x)^{N  - 2} + \sum_{j = N  - 2}^{N  - K}\binom{N  -1}{j}\\
        &\big[ jF(x)^{j - 1}(1 - F(x))^{N  - 1 - j} \\
      - &(N  - 1 - j)F(x)^{j}(1 - F(x))^{N  -2 - j} \big]\Big\}\\
       = &(N  - 1)f(x)\binom{N  - 2}{K - 1}F(x)^{N  - K - 1}(1 - F(x))^{K - 1} \\
       \geq &0
  \end{aligned}
  \label{eq:partialT}
\end{equation}
\end{small}
Thus $T(x)$ is a non-decreasing function of $x$.
\end{proof}

Then we can construct a symmetric cut-off equilibrium \cite{fudenberg1991game} for $\mathcal{M}_5$:
\begin{theorem}
Denote the unique root of the following equation as $x^*$.
\begin{equation}
  \frac{R}{K}T(x) - c = 0
\end{equation}
$\forall i \in [N]$, we have that
\begin{equation}
  \beta(q_i) = \left\{
  \begin{aligned}
  &q_i & &\textrm{if\;\,}q_i\geq x^*\\
  &0   & &\textrm{if\;\,}q_i < x^*
  \end{aligned}
  \right.
\end{equation}
is an equilibrium strategy of $\mathcal{M}_5$.
\end{theorem}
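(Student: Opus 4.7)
The plan is to first settle the well-definedness of $x^*$ and then verify the cut-off rule is a best response by a two-case analysis on whether $q_i$ lies above or below $x^*$. I would quickly dispatch the trivial case $N\leq K$, where $T\equiv 1$ so every user is always rewarded and $\beta(q_i)=q_i$ is clearly optimal whenever $R>Kc$, and focus on $N>K$ below.

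For existence and uniqueness of $x^*$, I would use that $F$ is continuous with continuous density to get $T(0)=0$ (every summand carries a factor $F(0)^{N-1-j}$ with $N-1-j\geq N-K\geq 1$) and $T(1)=1$, while by Lemma~\ref{lemma:Tfunction} the map $T$ is non-decreasing and strictly increasing wherever $f(x)>0$ and $0<F(x)<1$. The target $cK/R$ lies in $(0,1)$ since $R>Kc$, so the equation $\tfrac{R}{K}T(x)=c$ admits a unique root $x^*$ in the interior of the support.

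Next, fixing a player $i$ of type $q_i$ and assuming every other player uses $\beta$, the argument splits by cases. If $q_i\geq x^*$ and $i$ plays $q_i$, any opponent with type below $x^*$ contributes $0$ and cannot beat her, whereas an opponent with type above $x^*$ contributes her true quality and beats $i$ exactly when that type exceeds $q_i$. Thus $i$ is in the top $K$ iff fewer than $K$ opponents have types strictly greater than $q_i$, an event of probability $T(q_i)$, so her expected payoff $\tfrac{R}{K}T(q_i)-c$ is at least $\tfrac{R}{K}T(x^*)-c=0$ by monotonicity, and contributing $q_i$ weakly dominates contributing $0$.

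The slightly subtler case is $q_i<x^*$, where I must check that deviating from $\beta(q_i)=0$ to $q_i$ is not profitable. Under that deviation every opponent who contributes has type at least $x^*>q_i$ and therefore beats $i$, so $i$ lands in the top $K$ iff at most $K-1$ opponents contribute, i.e. at most $K-1$ opponents have types exceeding $x^*$. Continuity of $F$ makes $q\geq x^*$ and $q>x^*$ interchangeable, so this probability is exactly $T(x^*)=cK/R$ and the deviation yields expected payoff $\tfrac{R}{K}\cdot\tfrac{cK}{R}-c=0$, matching $\beta(q_i)=0$. The only real obstacle is the bookkeeping of who beats whom in the top-$K$ comparison; once the top-$K$ event is translated into a statement about the number of opponent types above the appropriate threshold, both cases collapse to a single evaluation of $T$.
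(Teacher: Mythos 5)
Your proof is correct and follows the same best-response verification as the paper's: fix a player, assume the others use the cut-off rule, and compare the expected payoff of participating against the zero payoff of abstaining in the two cases $q_i \geq x^*$ and $q_i < x^*$. The only differences are cosmetic: where the paper writes the winning probability as an explicit double sum $P(q)$ conditioned on the number of non-participants (and asserts $P(q) > cK/R$ without detail), you observe that this probability collapses to $T(q_i)$ (resp.\ $T(x^*)$) and invoke the monotonicity of $T$, which is cleaner; and your verification that $x^*$ exists and is unique, plus the $N\le K$ case, fills in points the paper leaves implicit.
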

\begin{proof}
Following the strategy, if a user with type $q(\geq x^*)$ chooses to participate in the game, the probability that she could get the reward is
\begin{small}
\begin{equation}
\begin{aligned}
  P(q) &= \sum_{n = 0}^{N - K - 1}\binom{N-1}{n}F(x^*)^n \\
       & \sum_{j = 0}^{K-1}\binom{N-n-1}{j}(F(q) - F(x^*))^{N-1-n-j}(1-F(q))^j \\
       &+ \sum_{n=N-K}^{N-1}\binom{N-1}{n}F(x^*)^n(1 - F(x^*))^{N-1-n}.
\end{aligned}
\end{equation}
\end{small}
If $q > x^*$, we have $P(q) > \frac{cK}{R}$, and the user's expected utility if she participates in the game is
\begin{small}
  \begin{equation}
    \frac{R}{K}P(q) - c > \frac{R}{K}\frac{cK}{R} - c =0.
  \end{equation}
\end{small}
If $q \leq x^*$, the expected utility for the user is $0$. Thus none can be better off by deviating her strategy unilaterally.
\end{proof}
\subsection{$\mathcal{M}_6$: Top $K$ Allocation, Continuous Action Space}
We first give a general description to a symmetric equilibrium strategy, then prove the existence of PNE when $F$ is the uniform distribution with the method of \cite{krishna2009auction}, and finally generalize the result to any general distribution.

Let us consider a symmetric strategy $\beta()$: each user $i$ with quality $q_i$ will contribute to the site with quality $\beta(q_i)$.
\begin{lemma}
If $\beta()$ is an equilibrium strategy, then $q_i > 0\Rightarrow \beta(q_i)>0$.
\end{lemma}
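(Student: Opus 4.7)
I would prove this by contradiction in two steps: first establish a monotonicity property showing that a zero-equilibrium action at some type $q_0 > 0$ forces $\beta \equiv 0$ on the whole interval $(0, q_0]$, and then exhibit a profitable small-$\epsilon$ deviation for type $q_0$ by exploiting the fact that, under monotonicity, every opponent contributes nothing with positive probability.

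For the monotonicity step, let $\rho(y)$ denote the expected reward from contributing $y$ when every other user follows $\beta$; this quantity depends on $y$ and on $\beta$ but not on the deviating user's own type, since the top-$K$ rule is a function of the contribution vector alone. Suppose toward contradiction that some $q' < q_0$ has $\beta(q') = y > 0$. Because $\beta(q') = y$ is a best response for type $q'$, comparing to the action $0$ yields $\rho(y) - cy/q' \ge 0$, hence $\rho(y) \ge cy/q'$. Playing the same action $y$ at the higher type $q_0$ then gives utility $\rho(y) - cy/q_0 > \rho(y) - cy/q' \ge 0$, which is strictly positive. This contradicts the hypothesis that $\beta(q_0) = 0$ (yielding utility $0$) is a best response at $q_0$. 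Hence $\beta \equiv 0$ on all of $(0, q_0]$.

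For the deviation step, the monotonicity just established implies that any single opponent contributes exactly $0$ with probability at least $F(q_0) > 0$, so by independence of types the probability that all $N-1$ opponents contribute $0$ is at least $F(q_0)^{N-1} > 0$. In that event, a deviation of type $q_0$ from $0$ to any $\epsilon > 0$ makes her the unique positive contributor, guaranteeing her a slot in the top $K$ and a reward of $R/K$. Her expected payoff from deviating to $\epsilon$ is therefore at least $(R/K) F(q_0)^{N-1} - c\epsilon/q_0$, which is strictly positive for sufficiently small $\epsilon$. Since the equilibrium utility at $q_0$ was $0$, this strict improvement contradicts $\beta$ being an equilibrium.

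The hardest part is the monotonicity step: it is essential that $\rho(y)$ be genuinely type-independent so that the cost-comparison isolates the savings $cy(1/q' - 1/q_0) > 0$ that the higher type enjoys. One minor edge case worth flagging is when $q_0$ sits at the infimum of the support of $F$, so that $F(q_0) = 0$; this can be handled by choosing any strictly smaller $q_0' < q_0$ for which $\beta(q_0') = 0$ (available from the monotonicity step) and lies in the interior of the support, and running the deviation argument at $q_0'$ instead.
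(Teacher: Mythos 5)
Your proof is correct and follows essentially the same two-step route as the paper: first show that the zero set of $\beta$ is downward-closed (the paper has the higher type deviate to $\beta(q')+\delta$, while you have her play $\beta(q')$ exactly and invoke type-independence of the reward, which is if anything cleaner), then exhibit a profitable $\epsilon$-deviation using the positive probability that few enough opponents contribute (the paper lower-bounds the reward probability by the event that at most $K-1$ opponents have type above $q$, you use the cruder but sufficient event that all $N-1$ opponents have type at most $q_0$, probability $F(q_0)^{N-1}$). One small caveat: your proposed fix for the edge case $F(q_0)=0$ is backwards, since any $q_0'<q_0$ also satisfies $F(q_0')=0$ and cannot lie in the interior of the support; however, the paper's own probability bound likewise vanishes when $F(q)=0$, so both arguments implicitly assume $F$ has full support and this does not constitute a gap specific to your proof.
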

\begin{proof}
  First we declare that given $\beta()$ is an equilibrium strategy, if $q_i < q_j$ and $\beta(q_i)>0$, then we have $\beta(q_j)>0$. Otherwise, user $j$ can take the action $x_j=\beta(q_i) + \delta$, where $\delta$ is sufficiently small,  to get positive utility, which contradicts with that $\beta()$ is an equilibrium strategy.

  Therefore, if there exists some $q>0$ such that $\beta(q)=0$, then we have $\beta(x)=0 \forall x\in[0,q]$. For any  user whose type falls in $[0,q]$, if she contributes $\epsilon$, then her expected utility is
  \begin{equation}
    \frac{R}{K}\sum_{n =N-K}^{N-1}\binom{N-1}{n}F(q)^n(1-F(q))^{N-n-1} - \frac{\epsilon}{q_i}c
  \end{equation}
We can always find an $\epsilon$ small enough to ensure the equation above is larger than $0$. Therefore $\beta(q_i)$ is not an equilibrium, which leads to a contradiction. Thus, there does not exist a $q>0$ such that $\beta(q)=0$.
\end{proof}

Suppose that users $j\neq i$ follow the symmetric equilibrium strategy $\beta()$. If user $i$ pretends that her type is $x$ and contributes $\beta(x)$, her expected utility is
\begin{equation}
  u_i(x;q_i) = \frac{R}{K}T(x) - \frac{\beta(x)}{q_i}c.
\end{equation}
The first order derivative of $u_i$ is
\begin{equation}
  \frac{\partial{u_i(x;q_i)}}{\partial{x}} = \frac{R}{K}\frac{\partial{T(x)}}{\partial{x}} - \frac{c\beta^{'}(x)}{q_i}.
  \label{eq:DT}
\end{equation}

If $\beta(q_i)$ is an equilibrium strategy for user $i$, her expected utility should be maximized at $x=q_i$. That is, we should have
$$\frac{\partial{u_i(x;q_i)}}{\partial{x}}| _{x=q_i} = 0.$$
Note that $\beta(0)=0$. Solving the above equation, we get
\begin{small}
\begin{equation}
  \begin{aligned}
  \beta(x) = & \frac{R(N-1)}{cK}\binom{N-2}{K-1}\int_{0}^{x}tF(t)^{N-K-1}(1-F(t))^{K-1}\mathrm{d}F(t)
  \label{eq:GTOPK}
  \end{aligned}
\end{equation}
\end{small}

Then we have the following results.
\begin{lemma}
If $\beta(x)\le x, \forall x\in[0,1]$, then the function $\beta()$ in the above equation is an equilibrium strategy.
\end{lemma}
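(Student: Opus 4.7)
The plan is to show that under the feasibility hypothesis $\beta(x)\le x$, the symmetric profile in which every user $j$ contributes $\beta(q_j)$ is a best-response to itself. First I would observe that $\beta$ is continuous, strictly increasing on $[0,1]$ with $\beta(0)=0$, and that $\beta(x)\le x$ guarantees $\beta(q_i)\in[0,q_i]$, so the profile is feasible for every type. Because $\beta$ is strictly monotone, the event that user $i$'s contribution $\beta(z)$ lands among the top $K$ coincides, when the other $N-1$ users follow $\beta$, with the event that her ``effective type'' $z$ beats at least $N-K$ of the opponents' types drawn from $F$; hence the winning probability is exactly $T(z)$, as in the derivation of \eqref{eq:GTOPK}.

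Next I would cover the two ranges of deviations by user $i$ separately. For $x_i\in[0,\beta(1)]$ I would write $x_i=\beta(z)$ and reduce the problem to maximizing $U_i(z)=\frac{R}{K}T(z)-\frac{\beta(z)}{q_i}c$ over $z\in[0,1]$. For $x_i\in(\beta(1),q_i]$, which is non-empty only when $q_i>\beta(1)$, the winning probability is already $1=T(1)$ while the cost strictly exceeds $\beta(1)c/q_i$; such deviations are therefore strictly dominated by $x_i=\beta(1)$ and may be discarded. So it suffices to maximize $U_i(z)$ over $z\in[0,1]$.

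The final step is a first-order/monotonicity argument. Differentiating the closed form of $\beta$ and comparing with the expression $T'(z)=(N-1)f(z)\binom{N-2}{K-1}F(z)^{N-K-1}(1-F(z))^{K-1}$ obtained inside the proof of Lemma \ref{lemma:Tfunction} gives the identity $\beta'(z)=\frac{Rz}{cK}T'(z)$. Substituting this into \eqref{eq:DT} yields $\frac{dU_i}{dz}=\frac{R\,T'(z)}{K}\bigl(1-\frac{z}{q_i}\bigr)$, which has the sign of $q_i-z$ since $T'(z)\ge 0$. Hence $U_i$ is unimodal on $[0,1]$ with its maximum at $z=q_i$, so $x_i=\beta(q_i)$ is a best response and $\beta$ is a symmetric PNE.

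I do not anticipate a serious obstacle: the FOC was engineered precisely so that $z=q_i$ is a critical point, and the extra factor $(1-z/q_i)$ promotes this to a global maximum on $[0,1]$ by a sign argument alone. The main care needed is dispatching contributions above $\beta(1)$, which the cost-probability comparison above handles, and ensuring that the inversion $x_i\mapsto z$ is well-defined — this is automatic once $F$ has positive density on its support, which is consistent with the continuous-derivative assumption on $F$ in the model.
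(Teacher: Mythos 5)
Your proof is correct, and it follows the argument the paper intends: the paper states this lemma without proof, but the identity $\beta'(z)=\frac{Rz}{cK}T'(z)$ and the resulting sign comparison $\frac{\partial u_i}{\partial x}=\frac{R}{K}T'(x)\bigl(1-\frac{x}{q_i}\bigr)$ are exactly the device used in the proof of the subsequent calibration theorem. Your treatment is in fact slightly more complete than the paper's, since you explicitly dispatch deviations outside the range of $\beta$ (contributions above $\beta(1)$, which give win probability $T(1)=1$ at strictly higher cost) and note the inversion $x_i\mapsto z$ needed to reduce the deviation space to reported types.
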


However, it is possible that $\beta(x)$ expressed by \myeqref{eq:GTOPK} is larger than $x$. For example, if $F$ is the uniform distribution over $[0,1]$, $\beta(x)$ can be written as below.
\begin{small}
\begin{equation}
  \beta(x) = \frac{R}{cK}(N-1)\binom{N-2}{K-1}\sum_{k=0}^{K-1}(-1)^{K-k-1}\binom{K-1}{k}\frac{x^{N-k}}{N-k}
\end{equation}
\end{small}
Then we have\footnote{$B(,)$ is the beta function. }

\begin{small}
\begin{equation*}
\begin{aligned}
    \beta(1) &= \int_{0}^{1}\frac{Rx}{cK}\frac{\partial T(x)}{\partial{x}}\mathrm{d}x \\
    & =\int_{0}^{1}\frac{R}{cK}(N-1)\binom{N-2}{K-1}\int_0^{1} x^{N-K}(1-x)^{K-1}\mathrm{d}x \\
    & = \frac{R}{cK}(N-1)\binom{N-2}{K-1}B(N-K +1 , K) \\
    & = \frac{R}{cK} \frac{N-K}{N},
\end{aligned}
\end{equation*}
\end{small}
which might be larger than $1$.

If $\beta(x)>x$ for some $x\in[0,1]$, $\beta(x)$ will not be an equilibrium strategy anymore. We need to calibrate $\beta(x)$. For ease of description, we first illustrate how to make calibration when $F$ is the uniform distribution, and then extend to a general distribution.

With some derivations, one can get that the equation $\beta(x) = x$ has at most positive two solutions in the region $(0,1]$ for uniform distribution $F$. If there exist two positive solutions (denote them as $x_1$ and $x_2$, and assume $x_1<x_2$), there will be an $x_p(> x_1)$ that satisfies $\beta^{'} (x_p)= 1$. Then we have:

\begin{theorem}
If $F$ is the uniform distribution over $[0,1]$ and $\beta(x)=x$ has two solutions in $(0,1]$, the following $\beta^*()$ function is an equilibrium, where $x_1$ and $x_p$ are defined above.
  \begin{equation}
    \beta^{*}(x) = \left\{
       \begin{aligned}
       & \beta(x) \; &x \in [0 , x_1] \\
       & x \; &x \in (x_1 , x_p]\\
       & \beta(x) -\beta(x_p) + x_p \; &x \in (x_p , 1] \\
       \end{aligned}
       \right.
  \end{equation}
\end{theorem}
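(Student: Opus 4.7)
The plan is to verify that when every other user follows $\beta^{*}$, no user with type $q_{i}$ can profitably deviate from contributing $\beta^{*}(q_{i})$. The map $\beta^{*}$ is continuous (its pieces agree at $x_{1}$ since $\beta(x_{1})=x_{1}$ and at $x_{p}$ by construction) and strictly increasing on each piece, hence globally invertible on its image. Given the others play $\beta^{*}$, the probability that user $i$'s contribution $y$ lies among the top $K$ equals $T((\beta^{*})^{-1}(y))$, so after setting $z=(\beta^{*})^{-1}(y)$ user $i$'s problem becomes to maximize
\begin{equation*}
\frac{R}{K}T(z)\;-\;\frac{\beta^{*}(z)}{q_{i}}\,c
\quad\text{subject to}\quad \beta^{*}(z)\leq q_{i}.
\end{equation*}
I would first confirm $0\leq\beta^{*}(x)\leq x$: on $[0,x_{1}]$ this holds because $x_{1}$ is the first crossing of $\beta$ with the line $y=x$; on $(x_{1},x_{p}]$ it is an equality; and on $(x_{p},1]$ it follows from $\beta^{*}(x_{p})=x_{p}$ together with the slope bound $\beta'(z)<1$ there, justified below.

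Next, the sign analysis uses the first-order identity $\frac{R}{K}T'(z)=\frac{c\beta'(z)}{z}$, a rewriting of the condition \myeqref{eq:GTOPK} that defined $\beta$. On $[0,x_{1}]$ and on $(x_{p},1]$, where $\beta^{*}$ agrees with $\beta$ up to an additive constant, the derivative of the objective with respect to $z$ simplifies to $\frac{c\beta'(z)(q_{i}-z)}{z\,q_{i}}$, whose sign is $\operatorname{sgn}(q_{i}-z)$. On the middle region $(x_{1},x_{p}]$, where $\beta^{*}(z)=z$, the derivative is $\frac{c(q_{i}\beta'(z)-z)}{z\,q_{i}}$. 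I would then split into three cases. If $q_{i}\leq x_{1}$ the derivative on region 1 changes sign at $z=q_{i}$, and moving into regions 2 or 3 is infeasible because it demands a contribution strictly exceeding $x_{1}\geq q_{i}$. If $q_{i}\in(x_{1},x_{p}]$, the derivative stays positive throughout region 1 and remains positive on region 2 for $z\leq q_{i}$ (since $q_{i}\beta'(z)\geq q_{i}\geq z$ when $\beta'(z)\geq 1$), while $z>q_{i}$ in region 2 or beyond is infeasible. If $q_{i}>x_{p}$, the derivative is positive through regions 1 and 2 and vanishes on region 3 exactly at $z=q_{i}$. In all three cases the unique maximizer is $z=q_{i}$, so user $i$'s best response is to contribute $\beta^{*}(q_{i})$; continuity of the objective at $z=x_{1}$ and at $z=x_{p}$ (the latter with matching one-sided derivatives because $\beta'(x_{p})=1$) rules out boundary deviations.

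The main obstacle is the pair of slope inequalities $\beta'(z)\geq 1$ on $(x_{1},x_{p})$ and $\beta'(z)<1$ on $(x_{p},1]$ that drive the entire analysis. I would derive these from the explicit form $\beta'(z)\propto z^{N-K}(1-z)^{K-1}$ for uniform $F$, a scaled Beta density that is unimodal on $[0,1]$, so the equation $\beta'=1$ has at most two roots. Since $\beta(0)=0$ and $\beta(x_{1})=x_{1}$, the mean value theorem furnishes a root $\xi\in(0,x_{1})$, which places the first root below $x_{1}$; the second root must then be exactly $x_{p}$ by its defining property $x_{p}>x_{1}$. Between the two roots $\beta'>1$, which covers $(x_{1},x_{p})$; past $x_{p}$ the derivative is on its decreasing tail and stays below $1$ on $(x_{p},1]$. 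With these bounds in hand, the three-region sign analysis above completes the proof and certifies $\beta^{*}$ as a symmetric pure Nash equilibrium of $\mathcal{M}_{6}$ under uniform $F$.
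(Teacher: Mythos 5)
Your proposal is correct and takes essentially the same route as the paper's proof: a three-region sign analysis of $\partial u_i(x;q_i)/\partial x$ using the identity $\tfrac{R}{K}T'(x)=\tfrac{c\beta'(x)}{x}$ to show that pretending type $z=q_i$ is optimal in each case. The only substantive addition is that you actually justify the slope bounds $\beta'\geq 1$ on $(x_1,x_p)$ and $\beta'<1$ on $(x_p,1]$ via the unimodality of $\beta'(z)\propto z^{N-K}(1-z)^{K-1}$, which the paper merely asserts.
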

\begin{proof}
First, if $x\in[0 , x_1]$, since $\beta(x) \leq x$, we have that $\beta^*(x)=\beta(x)$ is the best response of type $x$.

Second, it is clear that the first order derivative  $\beta^{'}(x)$ is larger than $1$ for any  $x \in (x_1 , x_p)$. Suppose that all the others follow strategy $\beta^{*}()$ except user $i$, and suppose she pretends that her type is $x$.
\begin{itemize}
\item
If $x \in (x_1 , x_p)$, we have
$$u_i(x ; q_i) = \frac{R}{K}T(x) - \frac{x}{q_i}c,$$
 and $$\frac{\partial{u_i(x ; q_i)}}{\partial{x}} > \frac{R}{K} \frac{\partial{T(x)}}{\partial{x}} - \frac{\beta^{'}(x)}{q_i}c \geq \frac{R}{K} \frac{\partial{T(x)}}{\partial{x}} - \frac{\beta^{'}(x)}{x}c= 0.$$ Therefore, the larger  $x$ is, the larger utility she will get. However, since the contributed quality is upper bounded by her type $q_i$, the best choice for her is to take the action $x_i=q_i$.
 \item If $x\in[0 , x_1]$, we have
$$\frac{\partial{u_i(x ; q_i)}}{\partial{x}} = \frac{R}{K} \frac{\partial{T(x)}}{\partial{x}} - \frac{\beta^{'}(x)}{q_i}c > \frac{R}{K} \frac{\partial{T(x)}}{\partial{x}} - \frac{\beta^{'}(x)}{x}c = 0.$$
So she should pretend her type is $x_1$, which is still worse than revealing the true type $q_i$.
\end{itemize}
Thus, for any $x$ in $(x_i, x_p]$, the best response is $\beta^{*}(x) = x$.

Third, note that for any $x$ in $(x_p , 1]$, we have $\beta^{'}(x) \leq 1$. Integrating $\beta^{'}(x)$ from $x_p$ to $x$ and using $\beta(x_p) = x_p$, we get
$$\beta^{*}(x) - x_p = \beta(x) - \beta(x_p).$$
It is easy to verify that $\beta^{*}(x)\le x$ for any $x$ in $(x_p,1]$. Therefore, we get that $\beta^{*}(x) = \beta(x) + x_p - \beta(x_p)$ is the best response for any $x$ in $(x_p , 1]$.

Thus, the theorem is proved.
\end{proof}

Figure \ref{fig:egusrS} shows an equilibrium strategy for $N = 11 , K = 5 , c = 1 , R = 8$.
\begin{figure}[!htb]
\centering
\includegraphics[width=0.8\columnwidth]{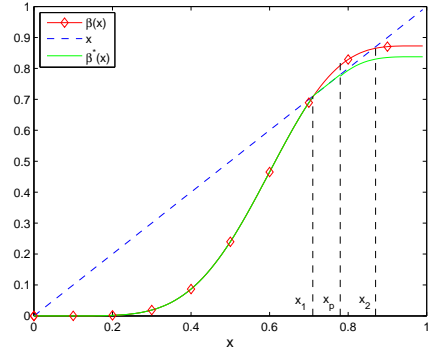}
\caption{An example equilibrium strategy}
\label{fig:egusrS}
\end{figure}

Next we generalize the above results. For a general distribution $F$ over $[0,1]$, we first initialize $\beta^*(x)=\beta(x), \forall x\in[0,1]$ and then calibrate $\beta^*(x)$ as follows.
\begin{enumerate}
  \item Check whether $\beta^*(x)>x$ starting from $x=0$ to $x=1$.
  \item Suppose $[x_1, x_2]$ is the first interval that $\beta^*(x)>x$, and $x_p$ is the point in this interval satisfying $\beta^{'*}(x_p)=1$. Let $o$ denote the value of $\beta^{*}(x)$ at $x_p$ (i.e., $o=\beta^*(x_p)$), and then calibrate $\beta^*(x)=x, \forall x\in[x_1,x_p]$ and $\beta^*(x)= \beta^*(x)-o+x_p, \forall x\in(x_p,1]$.
\item Continue to check whether $\beta^*(x)>x$ starting from $x=x_p$ to $x=1$. If there is still some interval with $\beta^*(x)>x$, we calibrate $\beta^*(x)$ as shown in Step 2.
\item We repeat the checking and calibrating procedure until $\beta^*(x)\le x, \forall x\in[0,1]$.
\end{enumerate}

After this calibration process, we obtain an equilibrium strategy $\beta^*(x)$ from $\beta(x)$, which is shown in \myeqref{eq:GTOPK}, for any distribution $F$. Therefore we have the following theorem.
\begin{theorem}
$\mathcal{M}_6$ has at least one symmetric PNE.
\end{theorem}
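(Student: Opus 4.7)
The plan is to show that the explicit construction sketched in the paragraphs immediately before the theorem actually produces a valid symmetric PNE for an arbitrary distribution $F$ with continuous density. I start from the candidate $\beta(x)$ defined by \eqref{eq:GTOPK}, which is forced on us by the first-order condition \eqref{eq:DT}, and then argue that the iterative calibration procedure terminates in a function $\beta^*$ that satisfies $\beta^*(x)\le x$ everywhere and is mutually a best response. The theorem statement here is a soft wrapper around the preceding uniform-case argument, so the proof naturally splits into three pieces: (i) the uniform-case theorem already stated treats the ``two-crossing'' situation; (ii) the general-case calibration reduces to repeatedly applying a one-step version of that construction; (iii) termination and global validity.

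First I would verify the key structural property on which the one-step calibration hinges: on any maximal interval $[x_1,x_2]\subseteq[0,1]$ where $\beta^*(x)>x$, the derivative $(\beta^*)'(x)$ strictly exceeds $1$ at $x_1$ and is at most $1$ by the time we reach $x_2$ (using $\beta^*(x_1)=x_1$ and $\beta^*(x_2)=x_2$ together with the mean value theorem). Continuity of $(\beta^*)'$, which comes from the continuity of the density $f=F'$ assumed in the model, then gives an interior point $x_p\in(x_1,x_2)$ with $(\beta^*)'(x_p)=1$. This point is exactly the pivot used in the calibration. Redefining $\beta^*$ to equal the identity on $[x_1,x_p]$ and shifting it downward by $o-x_p$ on $(x_p,1]$ preserves continuity and, crucially, preserves $(\beta^*)'$ on $(x_p,1]$, which will matter for the local-optimality argument.

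Next I would show that after this one-step calibration the user's pretending-to-be-$x$ utility
\begin{equation*}
u_i(x;q_i)=\tfrac{R}{K}T(x)-\tfrac{\beta^*(x)}{q_i}c
\end{equation*}
is still non-decreasing up to $x=q_i$ and non-increasing afterward, so that truthful reporting is optimal. On the flat segment $[x_1,x_p]$ the derivative $\partial u_i/\partial x$ has the same sign as $\tfrac{R}{K}T'(x)-\tfrac{c}{x}$, which by the first-order characterization of $\beta$ in \eqref{eq:DT} equals $\tfrac{c}{q_i}\bigl(\tfrac{\beta'(x)}{1}\cdot\tfrac{q_i}{x}-1\bigr)$ in sign and is therefore positive for $q_i>x$ and negative for $q_i<x$; this is precisely the three-case analysis already carried out in the uniform proof and it ports over verbatim. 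Combined with the unchanged behavior on $[0,x_1]$ and the shifted-but-derivative-preserving behavior on $(x_p,1]$, this shows $\beta^*$ is a best response on the union of the processed intervals.

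The main obstacle I expect is termination of the calibration loop: a general $F$ can in principle generate many, or even countably many, intervals where $\beta^*>x$. I would handle this by showing that each one-step calibration strictly pushes the function down on $(x_p,1]$ (by the positive amount $o-x_p>0$) while leaving $\beta^*(x)\le x$ on $[0,x_p]$, so the processed prefix $[0,x_p]$ grows monotonically and the unprocessed suffix shrinks. Either this yields finitely many steps and we are done, or we pass to the supremum of the $x_p$'s and use continuity of $\beta^*$ (each calibration only subtracts a constant on the tail, preserving continuity) together with the bound $\beta^*\le x$ at every pivot to extract a well-defined limit $\beta^*$ on all of $[0,1]$ satisfying $\beta^*(x)\le x$. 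A final appeal to the pointwise best-response check, exactly as in the uniform case, then shows $\beta^*$ is a symmetric PNE.
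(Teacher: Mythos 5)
Your proposal follows essentially the same route as the paper: the theorem is a wrapper around the preceding construction, namely the first-order-condition candidate $\beta$ from \eqref{eq:GTOPK} followed by the iterative ``set $\beta^*=x$ up to the pivot $x_p$ where $(\beta^*)'=1$, then shift the tail down'' calibration, with the best-response verification ported from the uniform-case theorem. The only additions are elaborations of steps the paper leaves implicit (existence of the pivot via the mean value theorem and continuity of $f$, and termination of the calibration by monotone growth of the processed prefix), so this is the same proof, slightly more carefully argued.
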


\subsection{$\mathcal{M}_7$: Proportional Allocation, Binary Action Space}
Now we study the existence of PNE of the mechanism with the proportional allocation rule and the binary action space under the partial-information setting.

For user $i$, let us consider the following cut-off strategy:
\begin{equation} \label{eq:7beta}
 \beta_i(x) = \left\{
  \begin{aligned}
     & q_i & \text{if } x\ge x^* \\
     & 0 & \text{if } x<x^*,
  \end{aligned}
  \right.
\end{equation}
where $x^*$ is a threshold parameter.

Suppose that users $j\ne i$ follow the above strategy. Then the expected utility of user $i$ can be written as follows if she participate in the game ($x_i=q_i$).
    {\small\begin{equation} \label{eq:7util}
  \begin{aligned}
 u_i(q_i;x^*) &= \sum_{k=0}^{N-1}\binom{N-1}{k}F(x^*)^{N-1-k}(1-F(x^*))^ku_i(q_i,k;x^*) - c\\
        & \overset{def}{=} y(q_i , x^*) - c,
  \end{aligned}
\end{equation}
}
where $u_i(q_i,k;x^*)$ is the expected utility of user $i$ given another $k$ users with quality larger than $x^*$ participating in the game, and it can be written as
\begin{equation*}
  R\int_{x^*}^{1}...\int_{x^*}^{1}\frac{q_i}{q_i + t_1 + t_2 + \ldots + t_k}\mathrm{d}F(t_1|x^*)...\mathrm{d}F(t_k|x^*).
\end{equation*}
In \myeqref{eq:7util}, when $k = 0$, $u_i(q_i,k;x^*)=R$, which means that user $i$ gets all the reward $R$ since no other user participate in the game ($k=0$ means $x_j=0, \forall j\ne i$).

If \myeqref{eq:7beta} is an equilibrium strategy, then the best response of user $i$ is also to follow the strategy given that all other users follow the strategy. That is,
\begin{equation} \label{eq:7solve}
 u_i(q_i;x^*) = \left\{
  \begin{aligned}
     & >0 & \text{if } q_i\ge x^* \\
     & =0 & \text{if } q_i = x^*\\
     & <0 & \text{if } x<x^*
  \end{aligned}
  \right.
\end{equation}

It is not difficult to get that \begin{itemize}
\item $y(q_i, x^*)$ increases w.r.t. $q_i$,
\item $y(0,0)-c=-c<0$ and $y(1,1)-c=R-c>0$.
\end{itemize}
We further assume that $F(.|.)$ is a continuous function; consequently, $y(t,t)$ is continuous. Therefore, there exists an $x^*$ satisfying the three conditions in \myeqref{eq:7solve}; in turn, this $x^*$ makes \myeqref{eq:7beta} a (symmetric) equilibrium strategy. Thus we have the following theorem.
\begin{theorem}
$\mathcal{M}_7$ has at least one PNE if $R>c$.
\end{theorem}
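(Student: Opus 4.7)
The plan is to reduce the equilibrium existence question to a one-dimensional root-finding problem. Given the cut-off form in \eqref{eq:7beta}, the requirement that it be a symmetric PNE is captured by the three conditions in \eqref{eq:7solve}; using the monotonicity of $y(q_i,x^*)$ in $q_i$, these three conditions collapse to the single equation $y(x^*,x^*) = c$. So the whole task becomes: find $x^* \in [0,1]$ with $g(x^*) := y(x^*,x^*) - c = 0$, and then check that this $x^*$ really makes \eqref{eq:7beta} a best response.

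First I would verify that $g$ is continuous on $[0,1]$. The outer binomial sum in \eqref{eq:7util} is a polynomial in $F(x^*)$, so its continuity in $x^*$ reduces to continuity of each term $u_i(q_i,k;x^*)$. The integrand $q_i/(q_i+t_1+\cdots+t_k)$ is bounded on the compact domain $[x^*,1]^k$ and jointly continuous in $(q_i,t_1,\dots,t_k)$; combined with the assumed continuity of $F$ (and hence of $F(\cdot|x^*)$ in both arguments), a dominated-convergence argument yields joint continuity of $u_i(q_i,k;x^*)$ in $(q_i,x^*)$, and therefore continuity of $g$.

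Next I would apply the intermediate value theorem. The excerpt already notes $g(0) = y(0,0) - c = -c < 0$ and $g(1) = y(1,1) - c = R - c$, which is strictly positive under the hypothesis $R > c$. Hence there exists $x^* \in (0,1)$ with $g(x^*) = 0$, which is the candidate threshold. I would then use the strict monotonicity of $y(q_i,x^*)$ in $q_i$ (which follows because increasing $q_i$ strictly increases the integrand $q_i/(q_i+t_1+\cdots+t_k)$ pointwise) to deduce $y(q_i,x^*) > c$ for $q_i > x^*$ and $y(q_i,x^*) < c$ for $q_i < x^*$. Consequently, when every other user plays the cut-off strategy at $x^*$, contributing $q_i$ is strictly better than abstaining precisely when $q_i > x^*$, which is exactly \eqref{eq:7beta} with the indifference case resolved either way at $q_i = x^*$.

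The main obstacle I expect is the continuity claim for $y(t,t)$: the expression involves an iterated integral against a conditional distribution whose domain of integration itself depends on $x^*$. I would handle this either by the change of variables $t_j = x^* + (1-x^*)s_j$ to push the $x^*$-dependence into the integrand on a fixed domain, or by invoking continuity of $F$ together with the uniform boundedness of the integrand to justify passing the limit through the integrals. Once continuity is in place, the rest is just IVT plus the monotonicity check, and both the positivity of $g(1)$ and the strictness of the inequalities in \eqref{eq:7solve} follow immediately from $R > c$.
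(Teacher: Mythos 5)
Your proposal follows essentially the same route as the paper: both reduce the problem to finding a root of $y(t,t)-c$ via the intermediate value theorem using $y(0,0)-c=-c<0$ and $y(1,1)-c=R-c>0$, and both invoke monotonicity of $y(q_i,x^*)$ in $q_i$ to turn that root into the cut-off conditions of \eqref{eq:7solve}. The only difference is that you spell out the continuity of $y(t,t)$ (via dominated convergence or a change of variables), which the paper simply assumes by positing continuity of $F(\cdot|\cdot)$; this is a welcome strengthening but not a different argument.
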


\section{Conclusions and Future work}
We studied UGC mechanisms under a new framework: Users are heterogeneous and the best quality a user can contribute can be different from others. Under the framework, we considered several mechanisms involving two allocation rules, two action spaces and two information settings. We proved the existence of multiple PNE for some mechanisms, the existence and uniqueness of PNE for some mechanisms, and the non-existence of PNE for some other mechanisms.

There are many issues to explore about UGC mechanisms in the future. First, the efficiency analysis is a  meaningful topic given the existence of multiple equilibria for some mechanisms. Second, we plan to study the mixed Nash equilibrium for UGC mechanisms. Third, we only considered linear cost function in this work. We will investigate more general cost functions (e.g., concave functions). Fourth, the comparison between different mechanisms would be an interesting topic.

\bibliographystyle{aaai}
\bibliography{mybib}

\appendix{\textbf{Appendix}}

\section{Omitted Proofs}
In this section, we give some technical details that are omitted in the main paper.

\subsection{Proof of Lemma \ref{lemma:equiType}}
\begin{proof}
The proof is by contradiction.
 Suppose user $i$ has a larger type than user $j$ ($q_i \geq q_j$) and contributes $x_i < q_i$ in equilibrium.

Setting $A = \sum_{k \ne i,j}x_k$, we have
  \begin{equation}
    \sqrt{\frac{Rq_i(A + q_j)}{c}} - (A + q_j) = x_i
    \label{eq:lemma21}
  \end{equation}
and
  \begin{equation}
    \sqrt{\frac{Rq_j(A + x_i)}{c}} - (A + x_i) \geq q_j.
    \label{eq:lemma22}
  \end{equation}
  From \myeqref{eq:lemma21} and \myeqref{eq:lemma22} we get
  \begin{equation}
    q_j(A + x_i) \geq q_i(A + q_j).
  \end{equation}
  With some simple derivations we can see
  \begin{equation}
    (q_j - q_i) A \geq q_j(q_i - x_i).
    \label{eq:lemma23}
  \end{equation}
  The left hand side of \myeqref{eq:lemma23} is smaller than zero but the right hand side is larger than zero. It is a contradiction.
\end{proof}

\subsection{Derivation of \myeqref{eq:algsyms1}}
By summing \myeqref{eq:foc_utility} over $i$, we get
\begin{equation}
  x_i + x_{-i} = \frac{R(N-1)}{c\sum_{k=1}^{N}\frac{1}{q_k}}.
  \label{eq:sum_interior}
\end{equation}
Substituting the above equation to \myeqref{eq:bstResponse3}, we obtain
\begin{equation}
   x_i=\frac{R(N-1)}{c\sum_{k =1}^N \frac{1}{q_k}}[1 - \frac{N-1}{q_i\sum_{k=1}^{N}\frac{1}{q_k}}],
   \label{eq:bstresponse_interior}
\end{equation}
which is the best response without considering feasibility constraints.

\subsection{Derivation of \myeqref{eq:algsyms2}}
Consider the case that only the first $n$ users participating in the game and the first $m$ users out of the $n$ users contribute their types. For user $ i \in \{m + 1 , m+2...n\}$,  \myeqref{eq:foc_utility} can be written as
\begin{equation}
  \frac{R(Q_m + y_{-i})}{(Q_m + y_i + y_{-i})^2} = \frac{c}{q_i}\quad\quad \forall i \in \{m + 1 , m+2...n\}.
\end{equation}
Summing the above equation over $i$, we obtain
\begin{equation}
  R\frac{(n - m)(Q_m + y_i + y_{-i}) - \sum_{i = m+1}^{n}y_i}{(Q_m + y_i + y_{-i})^2} = \sum_{i = m + 1}^{n}\frac{c}{q_i}.
\end{equation}
Solving the above equation, we get
$$y_i + y_{-i}=$$
$$\frac{(n-m-1) - 2Q_mA_{nm} \pm \sqrt{(n - m - 1)^2 + 4Q_mA_{nm}}}{2A_{nm}}.$$
We are only interested in the positive solution.
Denoting
\begin{equation}
\begin{aligned}
 & s_{nm} = Q_m + \sum_{i = m + 1}^{n}y_i \\
         &= \frac{(n-m-1) + \sqrt{(n - m - 1)^2 + 4Q_mA_{nm}}}{2A_{nm}},
\end{aligned}
\end{equation}
we arrive at
\begin{equation}
  y_i = s_{nm} - \frac{c}{Rq_i}s_{nm}^2.
\end{equation}


\subsection{Omitted proof of Theorem 1}
Define
\begin{equation}
  l(\delta , q) =  \frac{1}{2}(\frac{Rq}{c} - 2\delta - \sqrt{(\frac{Rq}{c})^2 - 4\delta\frac{Rq}{c}})
\end{equation}
and
\begin{equation}
  u(\delta , q) = \frac{1}{2}(\frac{Rq}{c} - 2\delta + \sqrt{(\frac{Rq}{c})^2 - 4\delta\frac{Rq}{c}}).
\end{equation}
We find that
\begin{equation}
\begin{aligned}
  l(\delta , q) &= \frac{1}{2}(\frac{Rq}{c} - 2\delta - \sqrt{(\frac{Rq}{c})^2 - 4\delta\frac{Rq}{c}}) \\
  & =  \frac{2\delta^2}{\frac{Rq}{c} - 2\delta + \sqrt{(\frac{Rq}{c})^2 - 4\delta\frac{Rq}{c}}}.
\end{aligned}
\end{equation}
So $l(\delta , q)$ monotonously increases with $\delta$ and $u(\delta , q)$ monotonously decreases with $\delta$.

If \myeqref{eq:originalGNE} has no solution, we know that she will not change her strategy to generate a content with quality larger than $\delta$. We only consider the case that \myeqref{eq:originalGNE} is solvable. 
\begin{small}
\begin{equation}
\begin{aligned}
   & \sqrt{\frac{Rq_jx_{-j}^0}{c}} - x^0_{-j} \geq \delta \Rightarrow l(\delta , q_j) \leq x_{-j}^0 \leq u(\delta , q_j)\\
   & \sqrt{\frac{Rq_jx_{-j}^{\epsilon_n}}{c}} - x^{\epsilon_n}_{-j} < \epsilon \Rightarrow x_{-j}^{\epsilon_n} < l(\epsilon , q_j)\textrm{ or }x_{-j}^{\epsilon_n} > u(\epsilon , q).
\end{aligned}
\end{equation}
\end{small}
As $\epsilon \rightarrow 0$, $\exists \delta_1 > 0, \delta_2 > 0$, s.t. 
\begin{equation}
  \begin{aligned}
  & l(\delta , q_j) - l(\epsilon , q_j) \geq \delta_1 \\
  & u(\epsilon , q_j) - u(\delta , q_j) \geq \delta_2,
  \end{aligned}
\end{equation} 
which shows that $x_{-j}^{\epsilon_n}$ does not converge to $x_{-j}^0$. It is contradicted with $\bm{x}^{\epsilon_n} \rightarrow \bm{x}^0$.

If $0 < x^0_j < q_j$, we know $x_{-j}^{\epsilon_n}$ will converge to $l(x_j^0 , q_j)$ or $u(x_j^0 , q_j)$. If user $j$ wants to deviate to
\begin{itemize}
  \item some another $x_j^{'} \in (0,q_j)$ but $x^0_j \ne x_j^{'}$, we know  $x_{-j}^{0}  = l(x_j^{'} , q_j)$ or $x_{-j}^{0} = u(x_j^{'} , q_j)$, which contradicts with the fact that the series $\{\bm{x}^{\epsilon_n}\}$ converges.
  \item $x_j^{'} = q_j$, we obtain $l(q_j,q_j) \leq x_{-j}^{0} \leq u(q_j,q_j)$. By the monotonicity of $l(,)$ and $u(,)$ we can find a contradiction.
  \item $x_j^{'} = 0$, we obtain $x_{-j}^{0} \leq l(0,q_j)$ or $x_{-j}^{0} \geq u(0,q_j)$. Neither $l(x_j^0 , q_j)$ nor $u(x_j^0 , q_j)$ could fall in those regions.
\end{itemize}

If $x^0_j = q_j$ but she wants to deviate to some $\delta^{'} \leq \delta < q_j$, we know $x_{-j}^{0} \leq l(\delta , q_j)$ or $x_{-j}^0 \geq u(\delta , q_j)$. While $x^0_j = q_j$ suggests that $x_{-j}^{\epsilon_n}$ will converge to a point in the region $l(q_i,q_i) \leq x_{-j}^0 \leq u(q_i,q_i)$, so we find a contradiction.

\end{document}